\newtheorem{prop}{Proposition}
\newtheorem{lemma}{Lemma}
\DeclareMathOperator*{\argmax}{arg\,max}
\begin{document}

\title{Compensation of Charging Station Overload via On-road Mobile Energy Storage Scheduling}

\author{\IEEEauthorblockN{Nan Chen$^{\dagger}$, Mushu Li$^{\dagger}$, Miao Wang$^{\star}$, Jinghuan Ma$^{+}$ and Xuemin (Sherman) Shen$^{\dagger}$}
\IEEEauthorblockA{$^{\dagger}$Department of Electrical and Computer Engineering, University of Waterloo, Waterloo, Ontario, Canada} 
\IEEEauthorblockA{$^\star$Department of Electrical and Computer Engineering, Miami University, Oxford, Ohio, United States} 
\IEEEauthorblockA{$^{+}$School of Electronic Information and Electrical Engineering, Shanghai Jiao Tong University, Shanghai, China}
 Email: n37chen@uwaterloo.ca, m475li@uwaterloo.ca, wangm64@miamioh.edu, mjhdtc@sjtu.edu.cn, sshen@uwaterloo.ca}


\maketitle

\begin{abstract}

Supported by the technical development of electric battery and charging facilities, plug-in electric vehicle (PEV) has the potential to be mobile energy storage (MES) for energy delivery from resourceful charging stations (RCSs) to limited-capacity charging stations (LCSs). In this paper, we study the problem of using on-road PEVs as MESs for energy compensation service to compensate charging station (CS) overload. A price-incentive scheme is proposed for power system operator (PSO) to stimulate on-road MESs fulfilling energy compensation tasks. The price-service interaction between the PSO and MESs is characterized as a one-leader, multiple-follower Stackelberg game. The PSO acts as a leader to schedule on-road MESs by posting service price and on-road MESs respond to the price by choosing their service amount. The existence and uniqueness of the Stackelberg equilibrium are validated, and an algorithm is developed to find the equilibrium. Simulation results show the effectiveness of the proposed scheme in utility optimization and overload mitigation.
\let\thefootnote\relax\footnote{\emph{This Manuscript has been accepted by IEEE Globecom'19.}}
\let\thefootnote\relax\footnote{\emph{Copyright (c) 2015 IEEE. Personal use of this material is permitted. However, permission to use this material for any other purposes must
be obtained from the IEEE by sending a request to pubs-permissions@ieee.org.}}

\end{abstract}

\section{Introduction}

Recently, the advancement of electric battery technology pushes forward the prevalence of plug-in electric vehicles (PEVs) in the automobile market \cite{expectation}. The increasing PEV charging demand, especially at peak hours, puts great pressure on the charging stations (CSs) that have limited charging capacities. The potential overload at CS feeders could incur severe power quality issues and transformer degradation \cite{impact2}. Therefore, additional power supply is required by infrastructure upgrade such as deploying flexible energy storages. Thanks to the experimental success of the bi-directional charger, the PEV can transmit energy from its rechargeable battery to the power system as a mobile energy storage (MES) \cite{bidirectional, nanTVT}. When limited-capacity CSs (LCSs) encounter power shortages in peak hours, MESs can deliver energy from CSs with redundant power (\emph{i.e.}, resourceful CSs (RCSs)) to LCSs. By fully utilizing the system energy resource, potential overload issues at LCSs can be mitigated without excessive infrastructure upgrade expenditure. As the PEV commercialization proceeds, and PEV becomes one of regular transportation options, a considerable portion of on-road PEVs can be stimulated to serve as MESs to compensate the LCS overload.

In the literature, many research works study MES scheduling to balance the power supply and demand effectively. The work in \cite{pyi} utilizes MESs to transmit energy from renewable energy plants to CSs and minimizes the MES transmission loss. MESs can also provide demand response service to charge or discharge energy depending on user needs as in the work \cite{DSMrobustness}. In our previous work \cite{nanTVT}, we propose to use MESs that belong to the PSO to fulfill energy compensation tasks among a group of CSs (GCS). Most related works consider that MESs fully comply with PSO commands. However, the MESs we schedule in this work are private-owned. These MESs have their own travelling plans, and the energy compensation service is considered as an additional on-road task rather than a mandatory task. Therefore, the PSO needs to provide additional incentive for these MESs to accomplish the overload compensation task.

As a well-developed mathematical model, game theory can precisely characterize incentive interactions between MESs and the PSO. Specifically, the PSO first posts the service price of energy compensation and then, in response to the posted price, MESs decide their service amount. This price-service interaction can be formulated by a sequential game model, such as the Stackelberg game for the interaction analysis. The Stackelberg game model has been applied in PEV charging scheduling \cite{approach}-\cite{multicharging}. In the work \cite{approach}, the CS is considered as the leader to maximize its charging revenue while PEVs are considered as followers to maximize their charging energy fairly. When scheduling PEV charging among a GCS, the GCS can be managed together by the PSO to maximize the overall energy utilization and revenue as in the work \cite{2dmarkov}, or CSs can compete with each other to form a multi-leader multi-follower game as in the work \cite{multicharging}. 
Stackelberg game can also be applied for PEV discharging as in the work \cite{v2gglobecom}, where the CS adjusts its price to maximize its charging and vehicle-to-grid service revenue. While PEV charging and discharging scheduling has been explored in the above works, their primary focus is on PEV in-station scheduling. However, in the case of MES scheduling, the service process becomes much more complicated. For MESs to accomplish energy compensation tasks, they are motivated to charge and discharge at the CSs along their travel routes. Therefore, the costs of additional battery degradation and service time should be considered. 

In this paper, we consider a scenario where the PSO stimulates on-road MESs by providing service incentive to mitigate the LCS overload. Our main contributions are as follows:

\begin{itemize}

\item A price-incentive scheme is proposed to stimulate MESs participating in the service by increasing their service revenues. The proposed scheme also guarantees a cost-efficient overload mitigation from the PSO perspective.

\item  A Stackelberg game is formulated to characterize the interaction between the PSO and MESs, where the PSO acts as the leader and MESs act as followers. The existence and uniqueness of the Stackelberg equilibrium are validated, and an algorithm is designed to find the equilibrium.

\end{itemize}

The remainder of the paper is organized as follows. The system model is introduced in Section II. The Stackelberg game is formulated in Section III, followed by the game analysis in Section IV. The simulation results are presented in Section V. Finally, the conclusion is given in Section VI.

\section{System Model}

As shown in Fig. \ref{Fig_systemmodel}, the system model consists of a GCS, on-road MESs, the PSO, and communication infrastructures. 
The system model analysis time window $\mathcal{H}$ is partitioned into $H$ time slots with equal interval of $\Delta t$. Consider the MES energy compensation service is fulfilled within $\Delta t$, and the MES scheduling is regularly conducted at each time slot $h \in \mathcal{H}$.
We consider that a GCS is composed of RCSs and LCSs that are geographically reachable by vehicles. RCSs, a set of CSs denoted by \begin{math} \mathcal{R}=\{R_\mathrm{1},R_\mathrm{2}\dots R_\mathrm{i}\} \end{math}, are normally deployed at urban areas with sufficient power supplies. In addition to charging arriving PEVs with charging demands, the surplus energy of RCSs can be stored by MESs and delivered to LCSs. LCSs, a set of CSs denoted by \begin{math} \mathcal{L}=\{L_\mathrm{1},L_\mathrm{2}\dots L_\mathrm{j}\} \end{math}, are usually deployed at rural areas with limited power capacities and thus could encounter overload issues at peak hours. 
At time $h$, RCS $R_\mathrm{i}$ sends information of its surplus energy $E_\mathrm{i,h}$ to the PSO while LCS $L_\mathrm{j}$ sends information of its minimal demanding energy $D^L_\mathrm{j,h}$ and maximal demanding energy $D^U_\mathrm{j,h}$ to the PSO. Both information is sent via wired communication technology such as fiber optic.

\begin{figure}[!t]
\setlength{\abovecaptionskip}{-5pt}
\centering
\includegraphics[width=0.5\textwidth]{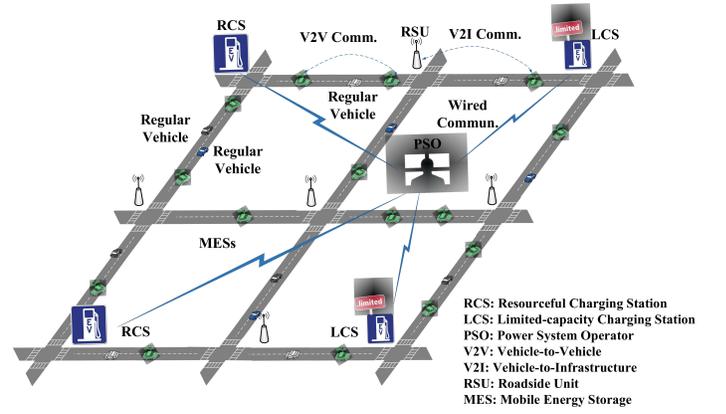}
\vspace{-0.3cm}
\caption{System model.}
\vspace{-0.3cm}
\label{Fig_systemmodel}
\end{figure}

A set of on-road MESs, denoted as \begin{math}\mathcal{K}=\{1, \dots k, \dots K \} \end{math}, can serve the energy compensation tasks when their planned travel routes pass the PSO's targeted RCSs and LCSs. Upon receiving service requests, on-road MESs send information of their planned travel routes and energy compensation capacities to the PSO. As MESs are constantly moving along the road, mobile-support wireless communication technology can be adopted for the information exchange between MESs and the PSO. For example, vehicle ad-hoc networks (VANETs) can be adopted to transmit the vehicle information to the PSO through vehicle-to-vehicle (V2V)/vehicle-to-infrastructure (V2I) communication.

Upon receiving information from MESs and CSs, the PSO first estimates the on-road MES service capacity. If an MES plans to travel from RCS $R_\mathrm{i}$ to LCS $L_\mathrm{j}$, the MES is counted as an energy compensation server for $R_\mathrm{i}-L_\mathrm{j}$ pair. By the end, PSO knows the number of MESs $N_\mathrm{ij}$ that can deliver energy from RCS $R_\mathrm{i}$ to LCS $L_\mathrm{j}$. Then, by analyzing the energy states of CSs, the PSO posts the service price to on-road MESs to stimulate them providing energy compensation service.

\section{Game Formulation}

In terms of time-variant GCS balance states, the interaction between the  PSO and MESs is formulated as a Stackelberg game at $h$-th time slot. As the MES scheduling is conducted at each time slot, the notation $h$ is omitted.

\subsection{Game Process}
 
We define the game in its strategic form: \begin{math} G=\{ \{\mathcal{K} \cup \{PSO\}\}, \{p\}, \{ e_\mathrm{k}\}_\mathrm{k\in \mathcal{K}}, \{U_\mathrm{P}\}, \{U_\mathrm{k}\}_\mathrm{k\in \mathcal{K}} \} \end{math}, where $\{ e_\mathrm{k}\}_\mathrm{k\in \mathcal{K}}$ denotes the set of strategies of MESs. $\{p\}$ denotes the PSO strategy (\emph{i.e.}, pricing); $\{U_\mathrm{P}\}$ and $\{U_\mathrm{k}\}_\mathrm{k\in \mathcal{K}}$ represent utility functions of the PSO and MESs, respectively. For a given service price $p$ by the PSO, the interaction between MESs is characterized as a non-cooperative game as follows:

\begin{itemize}
\item Players: the set of MESs $\mathcal{K}$.
\item Strategies: MES $k \in \mathcal{K}$, chooses an energy service amount $e_\mathrm{k}$.
\item Payoffs: the $k$-th MES receives utility $U_\mathrm{k}(e_\mathrm{k},p)$.
\end{itemize}

To find the Nash equilibrium, we need to find the best response function $e^\star_\mathrm{k}(p)$ of $k$-th MES under the service price $p$. The set of best  response functions $\{e^\star_\mathrm{k}(p)\}_\mathrm{k\in \mathcal{K}}$ is then sent to the PSO. PSO chooses the optimal service price $p^\star$ that maximizes its utility function $U_\mathrm{P}(p,\{e^\star_\mathrm{k}(p)\}_\mathrm{k\in \mathcal{K}})$.

\subsection{MES Model}

For MES \begin{math} k \in \mathcal{K} \end{math}, its utility function is defined as:

\begin{equation}
\label{MESutility}
U_\mathrm{k}(e_\mathrm{k},p)=pe_\mathrm{k}+p(e_\mathrm{k}-\bar{e})-\alpha^T_\mathrm{k}C^T_\mathrm{k}-\alpha^D_\mathrm{k}C^D_\mathrm{k}.
\end{equation}
 
\noindent The first term in equation (\ref{MESutility}) is the service reward calculated by multiplying the MES served energy amount with the service price. The second term is the motivation reward that motivates MES providing energy more than the expected average service amount $\bar{e}$. $\bar{e}$ is calculated by averaging the overall LCS minimal demands by the overall on-road MES number, denoted as \begin{math} \bar{e}=(\sum_{L_\mathrm{j}\in \mathcal{L}}D^L_\mathrm{j})/(\sum_\mathrm{R_\mathrm{i}\in \mathrm{R}}\sum_{L_\mathrm{j}\in \mathcal{L}}N_\mathrm{ij}) \end{math}. When the MES provides less energy than $\bar{e}$, the motivation reward is  negative, meaning that the MES receives less reward than the service reward. On the contrary, when the MES provides more service energy than $\bar{e}$, it will be rewarded more than service reward. The third term is the weighted service time cost that is the multiplication of the service time weight $\alpha^T_\mathrm{k}$ of MES $k$, and the service time $C^T_\mathrm{k}$ of MES $k$. The service time consists of MES charging time at its passing RCS $R_\mathrm{i}$ and discharging time at its destined LCS $L_\mathrm{j}$, which is denoted as:

 \begin{equation}
C^T_\mathrm{k}=\frac{e_\mathrm{k}}{P_\mathrm{i}}+\frac{e_\mathrm{k}}{P_\mathrm{j}},
 \end{equation}
 
 \noindent where $P_\mathrm{i}$ and $P_\mathrm{j}$ denote average charging/discharging power of $R_\mathrm{i}$ and $L_\mathrm{j}$ respectively. The travelling time is excluded from the utility function as $R_\mathrm{i}$ and $L_\mathrm{j}$ are on the MES planed travel route. The weight of service time cost $\alpha^T_\mathrm{k}$ indicates the MES driver preference towards service time and \begin{math} \alpha^T_\mathrm{k}>0 \end{math}. A high $\alpha^T_\mathrm{k}$ indicates that MES driver is unwilling to spend too much time in-station. The forth term of the function is the weighted battery degradation cost of MES discharging at LCS. It is calculated as the multiplication of the battery degradation weight $\alpha^D_\mathrm{k}$ of $k$-th MES and the battery degradation cost $C^D_\mathrm{k}$ of $k$-th MES. The battery degradation cost refers to a modified model as in the work \cite{batterydegradation}:
 
 \begin{equation}
C^D_\mathrm{k}=(\beta_\mathrm{1}P^3_\mathrm{j}+\beta_\mathrm{2}P^2_\mathrm{j}+\beta_\mathrm{3}P_\mathrm{j}+\beta_\mathrm{4})(\alpha_\mathrm{1}\frac{e^2_\mathrm{k}}{B^2_\mathrm{k}}+\alpha_\mathrm{2}\frac{e_\mathrm{k}}{B_\mathrm{k}}).
 \end{equation}
 
 \noindent The term of discharging power degradation, $\beta_\mathrm{1}P^3_\mathrm{j}+\beta_\mathrm{2}P^2_\mathrm{j}+\beta_\mathrm{3}P_\mathrm{j}+\beta_\mathrm{4}$, is a cubic function with coefficients $\beta_\mathrm{1}$, $\beta_\mathrm{2}$, $\beta_\mathrm{3}$, and $\beta_\mathrm{4}$. It is positively related to discharging power $P_\mathrm{j}$ at the MES destined LCS $L_\mathrm{j}$.  The term of depth-of-discharge (DoD) degradation, $\alpha_\mathrm{1}\frac{e^2_\mathrm{k}}{B^2_\mathrm{k}}+\alpha_\mathrm{2}\frac{e_\mathrm{k}}{B_\mathrm{k}}$, is a quadratic function that is positively related to the battery DoD $\frac{e_\mathrm{k}}{B_\mathrm{k}}$, and thus coefficient \begin{math} \alpha_\mathrm{1}>0 \end{math}. Similar to $\alpha^T_\mathrm{k}$, a high degradation weight $\alpha^D_\mathrm{k}$ denotes a high unwillingness to discharge. To simplify the equation, we denote \begin{math} D_\mathrm{k}=\beta_\mathrm{1}P^3_\mathrm{j}+\beta_\mathrm{2}P^2_\mathrm{j}+\beta_\mathrm{3}P_\mathrm{j}+\beta_\mathrm{4}>0 \end{math}.
 
Meanwhile, the MES service energy should be within its feasible range:

\begin{equation}
\label{MESconstraint}
0 \leq e_\mathrm{k} \leq B_\mathrm{k}-e^I_\mathrm{k}, 
\end{equation}  
 
\noindent where $B_\mathrm{k}$ denotes the $k$-th MES battery capacity and $e^I_\mathrm{k}$ denotes the initial state-of-charge (SoC) of MES $k$. MESs will not participate in the service when they cannot obtain any profit, and thus the utility function needs to satisfy:

\begin{equation}
\label{MESutilityC}
U_\mathrm{k}(e_\mathrm{k},p)>0.
\end{equation}
 
\noindent Therefore, given the posted price $p$, the MES decision making process is formulated as an optimization problem:

 \begin{align}
 \label{Op1}
\max_{e_\mathrm{k}} & \quad U_\mathrm{k}(e_\mathrm{k}, p) \\
 \textrm{s.t.}  & \quad (\ref{MESconstraint}), (\ref{MESutilityC}), \quad \forall k \in \mathcal{K} \notag.
\end{align}

 \subsection{PSO Model}
 
As the operator of the GCS, the PSO adjusts the posted price to maximize its utility function, which is denoted as:

\begin{multline}
\label{PSOutility}
U_\mathrm{P}(p,e_\mathrm{k})=\alpha_\mathrm{L}\sum_{L_\mathrm{j}\in \mathcal{L}}(-(a_\mathrm{j}\sum_{k \in \Lambda_\mathrm{j}}e_\mathrm{k}-b_\mathrm{j})^2+c_\mathrm{j}) \\
-(\sum_{k \in \mathcal{K}} pe_\mathrm{k}+\sum_{k \in \mathcal{K}} p(e_\mathrm{k}-\bar{e})).
\end{multline}

\noindent The first term is the weighted loading revenue that is the product of loading weight $\alpha_\mathrm{L}$ and the summation of LCS loading revenues. Denote a set of MESs whose destined LCS is $L_\mathrm{j}$ as $\Lambda_\mathrm{j}$. For LCS $L_\mathrm{j}$, the loading revenue increases as more energy delivered to the station and reaches the peak revenue at the maximal demanding load $D^U_\mathrm{j}$. Therefore, the loading revenue of LCS $L_\mathrm{j}$ is characterized as a quadratic function with its peak value at $D^U_\mathrm{j}$. We set \begin{math} a_\mathrm{j}=5\times 10^{-4} D^U_\mathrm{j}, b_\mathrm{j}=a_\mathrm{j}D^U_\mathrm{j}, c_\mathrm{j}=b_\mathrm{j}^2\end{math}. The second term of the function is the summation of all MES service costs and motivation costs, as introduced in Section III-B.

For LCS $L_\mathrm{j}$, the MES delivered energy should be within its demanding energy range \begin{math} [D^L_\mathrm{j}, D^U_\mathrm{j}] \end{math}. Thus, the LCS energy constraint is denoted as:

\begin{equation}
\label{PSOC1}
D^L_\mathrm{j} \leq \sum_{k \in \Lambda_\mathrm{j}} e_\mathrm{k} \leq D^U_\mathrm{j}, \forall L_\mathrm{j} \in \mathcal{L}.
\end{equation}

\noindent On the energy supplier side, the energy stored by MESs cannot exceed the maximal surplus energy capacities at RCSs. Denote a set of MESs whose passing RCS is $R_\mathrm{i}$ as $\Omega_\mathrm{i}$. Then, the RCS energy constraint is denoted as:

\begin{equation}
\label{PSOC2}
\sum_{k \in \Omega_\mathrm{i}} e_\mathrm{k} \leq E_\mathrm{i}, \forall R_\mathrm{i} \in \mathcal{R}.
\end{equation}

\noindent Therefore, the price decision process is formulated as an optimization problem as:

 \begin{align}
 \label{Op2}
\max_{p} & \quad U_\mathrm{P}(e_\mathrm{k},p) \\
 \textrm{s.t.}  & \quad (\ref{PSOC1}), (\ref{PSOC2}) \notag.
\end{align}

\section{Game Analysis}

\subsection{Existence and Uniqueness of Stackelberg Game}

By solving problem (\ref{Op1}), we can obtain the best-response strategy of MES $k$, denoted as $e^{\star}_\mathrm{k}(p)$. When followers are at Nash equilibrium, all followers choose their best-response strategies and the strategy set is denoted as \begin{math} \{e^{\star}_\mathrm{k}(p)\}_\mathrm{k\in \mathcal{K}}=\{e^\star_\mathrm{1}(p), \dots, e^\star_\mathrm{K}(p)\} \end{math}. Given MES best-response strategy profile, the optimal price $p^\star$ can be obtained by solving problem (\ref{Op2}). Therefore, the profile of $(p^\star, \{e^{\star}_\mathrm{k}(p)\}_\mathrm{k\in \mathcal{K}})$ is the Stackelberg equilibrium for the proposed game, which is calculated as:

 \begin{align}
 \label{Op3}
(p^\star, \{e^{\star}_\mathrm{k}(p)\}_\mathrm{k\in \mathcal{K}})= & \argmax_{p} U_\mathrm{P}(p, \{e^\star_\mathrm{k}(p)\}_\mathrm{k\in \mathcal{K}}) \\
 \textrm{s.t.}  & \quad e^\star_\mathrm{k}(p)=\argmax U_\mathrm{k}(e_\mathrm{k},p), \quad k \in \mathcal{K}\notag.
\end{align}

We first analyze the follower-level game by computing MES best-response strategy in the following lemma.

\begin{lemma}
MES $k$ has a unique best-response strategy $e^\star_\mathrm{k}(p)$ for a given service price $p$, denoted as:

\begin{equation}
\label{BRF}
e^\star_\mathrm{k}(p)=
  \begin{cases}
    0,     & \quad p \leq p^L_\mathrm{k} \\
    \frac{2p-\alpha^T_\mathrm{k}\frac{P_\mathrm{i}+P_\mathrm{j}}{P_\mathrm{i}P_\mathrm{j}}-\alpha^D_\mathrm{k}\alpha_\mathrm{2}D_\mathrm{k}/B_\mathrm{k}}{2\alpha_\mathrm{1}\alpha^D_\mathrm{k}D_\mathrm{k}/B^2_\mathrm{k}},  & \quad p^L_\mathrm{k} < p < p^U_\mathrm{k} \\
    B_\mathrm{k}-e^I_\mathrm{k}, & \quad p \geq p^U_\mathrm{k}
  \end{cases}
\end{equation}

\noindent where $p^L_\mathrm{k}$ is the rejection price, below which MES $k$ will not provide service. $p^U_\mathrm{k}$ is the saturated price at which MES $k$ provides maximal service capacity. $p^L_\mathrm{k}$ and $p^U_\mathrm{k}$ are denoted as:

\begin{equation}
\label{MESprice}
\scriptstyle
\left\{
             \begin{array}{lr}
             p^L_\mathrm{k}=0.5(\alpha^T_\mathrm{k}\frac{P_\mathrm{i}+P_\mathrm{j}}{P_\mathrm{i}P_\mathrm{j}}+\alpha^D_\mathrm{k}\alpha_\mathrm{2}D_\mathrm{k}/B_\mathrm{k}) \\
			p^U_\mathrm{k}= p^L_\mathrm{k}+\alpha_\mathrm{1}\alpha^D_\mathrm{k}D_\mathrm{k}(B_\mathrm{k}-e^I_\mathrm{k})/B^2_\mathrm{k}
             \end{array}
\right.
\end{equation}

\end{lemma}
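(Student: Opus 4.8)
The plan is to treat the follower problem (\ref{Op1}) as a one-dimensional strictly concave maximization over a compact interval. First I would substitute the closed forms of $C^T_\mathrm{k}$ and $C^D_\mathrm{k}$ into (\ref{MESutility}) to express $U_\mathrm{k}(\cdot,p)$ as an explicit quadratic in $e_\mathrm{k}$: the $e_\mathrm{k}^2$-coefficient is $-\alpha_\mathrm{1}\alpha^D_\mathrm{k}D_\mathrm{k}/B_\mathrm{k}^2$, the $e_\mathrm{k}$-coefficient is $2p-\alpha^T_\mathrm{k}\frac{P_\mathrm{i}+P_\mathrm{j}}{P_\mathrm{i}P_\mathrm{j}}-\alpha^D_\mathrm{k}\alpha_\mathrm{2}D_\mathrm{k}/B_\mathrm{k}$, and the additive constant $-p\bar{e}$ is irrelevant to the maximizer. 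Since $D_\mathrm{k}>0$, $\alpha_\mathrm{1}>0$ and $\alpha^D_\mathrm{k}>0$, the leading coefficient is strictly negative, so $U_\mathrm{k}(\cdot,p)$ is strictly concave on the feasible interval $[0,\,B_\mathrm{k}-e^I_\mathrm{k}]$ given by (\ref{MESconstraint}); a strictly concave function on a nonempty compact interval has a unique maximizer, which establishes the uniqueness claim.

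Next I would locate that maximizer explicitly. Setting $\partial U_\mathrm{k}/\partial e_\mathrm{k}=0$ yields the unique stationary point $\tilde{e}_\mathrm{k}(p)=\big(2p-\alpha^T_\mathrm{k}\frac{P_\mathrm{i}+P_\mathrm{j}}{P_\mathrm{i}P_\mathrm{j}}-\alpha^D_\mathrm{k}\alpha_\mathrm{2}D_\mathrm{k}/B_\mathrm{k}\big)\big/\big(2\alpha_\mathrm{1}\alpha^D_\mathrm{k}D_\mathrm{k}/B_\mathrm{k}^2\big)$, which is affine and strictly increasing in $p$. By concavity, the constrained maximizer over $[0,\,B_\mathrm{k}-e^I_\mathrm{k}]$ is the projection of $\tilde{e}_\mathrm{k}(p)$ onto that interval: it equals $0$ when $\tilde{e}_\mathrm{k}(p)\le 0$, equals $\tilde{e}_\mathrm{k}(p)$ when $0<\tilde{e}_\mathrm{k}(p)<B_\mathrm{k}-e^I_\mathrm{k}$, and equals $B_\mathrm{k}-e^I_\mathrm{k}$ when $\tilde{e}_\mathrm{k}(p)\ge B_\mathrm{k}-e^I_\mathrm{k}$. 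Using the monotonicity of $\tilde{e}_\mathrm{k}$ in $p$, each of the two threshold equations $\tilde{e}_\mathrm{k}(p)=0$ and $\tilde{e}_\mathrm{k}(p)=B_\mathrm{k}-e^I_\mathrm{k}$ has a unique root in $p$; solving them returns precisely $p^L_\mathrm{k}$ and $p^U_\mathrm{k}$ of (\ref{MESprice}), and substituting back gives the three branches of (\ref{BRF}).

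The only delicate point is the participation constraint (\ref{MESutilityC}): problem (\ref{Op1}) also imposes $U_\mathrm{k}(e_\mathrm{k},p)>0$, so strictly the MES offers the projected service amount only if the resulting utility is positive and otherwise withdraws with $e_\mathrm{k}=0$. I would dispatch this by noting that the three-branch formula gives the MES's best \emph{service} response conditional on participation, with the branch $p\le p^L_\mathrm{k}$ already modelling the non-participating outcome $e_\mathrm{k}=0$; if one wants (\ref{MESutilityC}) enforced exactly, $p^L_\mathrm{k}$ should be read as $\max\{p^L_\mathrm{k},\hat{p}_\mathrm{k}\}$, where $\hat{p}_\mathrm{k}$ is the unique price (unique because the optimal value is continuous and strictly increasing in $p$) at which the optimal utility reaches zero, and the stated form corresponds to the regime $\hat{p}_\mathrm{k}\le p^L_\mathrm{k}$. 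I expect this boundary bookkeeping to be the main subtlety; the concavity and projection steps are otherwise routine one-variable calculus.
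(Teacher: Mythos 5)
Your proof is correct and follows essentially the same route as the paper's: both establish strict concavity of $U_\mathrm{k}(\cdot,p)$ via the negative second derivative in $e_\mathrm{k}$ and then read off the three-branch best response from the first-order/KKT conditions on the box constraint (\ref{MESconstraint}), though you actually carry out the threshold computation for $p^L_\mathrm{k}$ and $p^U_\mathrm{k}$ that the paper omits ``due to space limitations.'' Your remark on the strict participation constraint (\ref{MESutilityC}) flags a genuine subtlety that the paper's proof silently drops, and reading $p^L_\mathrm{k}$ as $\max\{p^L_\mathrm{k},\hat{p}_\mathrm{k}\}$ is the right way to reconcile the stated formula (\ref{BRF}) with problem (\ref{Op1}).
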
 

\begin{proof}
For MES $k$, the strategy set is denoted as $\{e_\mathrm{k}|e_\mathrm{k}\in R, 0\leq e_\mathrm{k}\leq B_\mathrm{k}-e^I_\mathrm{k} \}$, which is the intersection of two half-spaces. Thus, the MES strategy set is non-empty and convex. To find the best-response strategy of $k$-th MES, we solve the optimization problem (\ref{Op1}). First, we analyze the property of the objective function $U_\mathrm{k}(e_\mathrm{k},p)$ by calculating the second-order derivative of the function:

\begin{equation}
\label{deri}
\frac{\partial^2 U_\mathrm{k}(e_\mathrm{k},p)}{\partial e_\mathrm{k}^2}=-2\alpha^D_\mathrm{k}\alpha_\mathrm{1}D_\mathrm{k}.
\end{equation}

\noindent As \begin{math}\alpha^D_\mathrm{k}, \alpha_\mathrm{1},  D_\mathrm{k}>0\end{math}, the value of equation (\ref{deri}) is negative. Thus, problem (\ref{Op1}) is proven to be a convex optimization problem, and the best response strategy for MES $k$ is the global optimum. By applying Lagrangian function and Karush-Kuhn-Tucker (KKT) conditions to problem (\ref{Op1}), we can obtain the best response strategy of MES $k$. The detailed calculation is omitted due to space limitations.

\end{proof}


Based on the MES best-response strategy,  we define the feasible range of service price $p$. The PSO adjusts its price within the range between the minimal and maximal value of \begin{math} p_\mathrm{range} \triangleq [p^L_\mathrm{1} \dots p^L_\mathrm{K}, p^U_\mathrm{1} \dots p^U_\mathrm{K}]\end{math}. When the price is below \begin{math} \min\{p_\mathrm{range}\}\end{math}, all MESs will not participate in the service. When the price reaches \begin{math} \max\{p_\mathrm{range}\}\end{math}, all MESs will use up their battery space for the service and no higher price is needed. By calculating $p^L_\mathrm{k}$ and $p^U_\mathrm{k}$ for MES $k$ using equation (\ref{MESprice}), and sorting all $p^L_\mathrm{k}$ and $p^U_\mathrm{k}$ in an ascending order, we have the feasible set of the price. The price set is an $M$-element vector $\gamma$, where \begin{math} \gamma_\mathrm{1} \leq \gamma_\mathrm{2} \leq \dots \leq \gamma_\mathrm{m} \leq \gamma_\mathrm{m+1} \dots \gamma_\mathrm{M} \end{math}. Further, define \begin{math} \Gamma_\mathrm{m} \triangleq [\gamma_\mathrm{m}, \gamma_\mathrm{m+1}] \end{math} for \begin{math} m=1, 2, \dots M-1 \end{math}, we can divide the price $p$ range into $M-1$ intervals.

To find the optimal price for PSO, we decompose problem (\ref{Op2}) into $M-1$ sub-problems  where the $m$-th sub-problem aims to find the optimal price within the range $\Gamma_\mathrm{m}$, similar to the work in \cite{mushubargain}.

\begin{lemma}
In the sub-domain of \begin{math} \Gamma_\mathrm{m}, \forall m \end{math}, problem (\ref{Op2}) is a convex optimization problem.
\end{lemma}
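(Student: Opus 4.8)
The plan is to show that, restricted to the interval $\Gamma_\mathrm{m}$, the PSO utility $U_\mathrm{P}(p,\{e^\star_\mathrm{k}(p)\}_\mathrm{k\in\mathcal{K}})$ is a concave function of the single scalar variable $p$ over a convex feasible set, so that the maximization is a convex program. First I would observe that on the interior of $\Gamma_\mathrm{m}=[\gamma_\mathrm{m},\gamma_\mathrm{m+1}]$ none of the breakpoints $p^L_\mathrm{k},p^U_\mathrm{k}$ lies strictly inside the interval, so each MES's best response $e^\star_\mathrm{k}(p)$ is given by exactly one branch of the piecewise definition in \eqref{BRF}: either the constant $0$, the constant $B_\mathrm{k}-e^I_\mathrm{k}$, or the affine function $e^\star_\mathrm{k}(p)=\frac{2p-\alpha^T_\mathrm{k}\frac{P_\mathrm{i}+P_\mathrm{j}}{P_\mathrm{i}P_\mathrm{j}}-\alpha^D_\mathrm{k}\alpha_\mathrm{2}D_\mathrm{k}/B_\mathrm{k}}{2\alpha_\mathrm{1}\alpha^D_\mathrm{k}D_\mathrm{k}/B^2_\mathrm{k}}$. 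In all three cases $e^\star_\mathrm{k}(p)$ is an affine (in particular, $C^1$ and convex-and-concave) function of $p$ on $\Gamma_\mathrm{m}$, with slope $s_\mathrm{k}\ge 0$ equal to $1/(\alpha_\mathrm{1}\alpha^D_\mathrm{k}D_\mathrm{k}/B_\mathrm{k}^2)$ on the middle branch and $0$ on the two flat branches.

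Next I would substitute these affine maps into $U_\mathrm{P}$ and examine the two terms of \eqref{PSOutility} separately. The second (cost) term is $-\sum_{k}\bigl(pe^\star_\mathrm{k}(p)+p(e^\star_\mathrm{k}(p)-\bar e)\bigr) = -2p\sum_k e^\star_\mathrm{k}(p)+p\sum_k\bar e$; since each $e^\star_\mathrm{k}(p)$ is affine in $p$, $\sum_k e^\star_\mathrm{k}(p)$ is affine, hence $p\sum_k e^\star_\mathrm{k}(p)$ is quadratic in $p$ with leading coefficient $\sum_k s_\mathrm{k}\ge 0$, so this term is concave (its negative of a convex quadratic). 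The first term is $\alpha_\mathrm{L}\sum_{L_\mathrm{j}\in\mathcal{L}}\bigl(-(a_\mathrm{j}\sum_{k\in\Lambda_\mathrm{j}}e^\star_\mathrm{k}(p)-b_\mathrm{j})^2+c_\mathrm{j}\bigr)$: inside each summand $\sum_{k\in\Lambda_\mathrm{j}}e^\star_\mathrm{k}(p)$ is affine in $p$, so $a_\mathrm{j}\sum_{k\in\Lambda_\mathrm{j}}e^\star_\mathrm{k}(p)-b_\mathrm{j}$ is affine, its square is a convex quadratic, and the negative (times $\alpha_\mathrm{L}>0$) is concave. A nonnegative-weighted sum of concave functions is concave, so $U_\mathrm{P}$ is concave in $p$ on $\Gamma_\mathrm{m}$. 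Finally, the feasible set: the constraints \eqref{PSOC1}–\eqref{PSOC2} become, after substitution, linear inequalities in $p$ (each $\sum_{k\in\Lambda_\mathrm{j}}e^\star_\mathrm{k}(p)$ and $\sum_{k\in\Omega_\mathrm{i}}e^\star_\mathrm{k}(p)$ is affine in $p$), and intersecting these half-lines with the box $\Gamma_\mathrm{m}$ yields a (possibly empty) closed interval, which is convex. Maximizing a concave objective over a convex set is a convex optimization problem, which is the claim.

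The main obstacle I anticipate is bookkeeping at the branch boundaries rather than any deep analytic difficulty: one must confirm that on the \emph{closed} interval $\Gamma_\mathrm{m}$ the best response is genuinely single-branch (the endpoints $\gamma_\mathrm{m},\gamma_\mathrm{m+1}$ are themselves breakpoints for some MESs, so continuity of $e^\star_\mathrm{k}$ — which holds since the middle branch matches $0$ at $p^L_\mathrm{k}$ and $B_\mathrm{k}-e^I_\mathrm{k}$ at $p^U_\mathrm{k}$ — is what makes the affine description valid up to and including the endpoints). A secondary point worth a sentence is that when several MESs share the same breakpoint value, the vector $\gamma$ may have repeated entries and some $\Gamma_\mathrm{m}$ degenerate to a point; these are handled trivially (a single point is convex and the problem is vacuous there), and one restricts attention to the nondegenerate intervals $\gamma_\mathrm{m}<\gamma_\mathrm{m+1}$ without loss of generality.
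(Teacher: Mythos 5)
Your proposal is correct and follows essentially the same route as the paper: substitute the best responses, note that on $\Gamma_\mathrm{m}$ each $e^\star_\mathrm{k}(p)$ is a single affine branch, conclude the objective is a concave quadratic in $p$ and the substituted constraints are half-spaces. The paper verifies concavity by computing $\partial^2 U_\mathrm{P}/\partial p^2 \le 0$ explicitly rather than by composition rules, and your extra remarks on endpoint continuity and degenerate intervals are sound but not needed beyond what the paper does.
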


\begin{proof}
As the price is continuous within $\Gamma_\mathrm{m}$, the price set is convex. By substituting the MES best-response strategy into problem (\ref{Op2}), the objective function is calculated as:

\begin{multline}
\label{PSOutility2}
U_\mathrm{P}( p, \Gamma_\mathrm{m})=\\
\alpha_\mathrm{L}\sum_{L_\mathrm{j}\in \mathcal{L}}(-(a_\mathrm{j}(\sum_{k \in \Lambda_\mathrm{j}\cap \phi_\mathrm{1}}(y_\mathrm{k}p-z_\mathrm{k})+\sum_{k \in \Lambda_\mathrm{j}\cap \phi_\mathrm{2}}(B_\mathrm{k}-e^I_\mathrm{k}))-b_\mathrm{j})^2+c_\mathrm{j}) \\
-2p(\sum_{k \in \phi_1}(y_\mathrm{k}p-z_\mathrm{k})+\sum_{k \in \phi_2} (B_\mathrm{k}-e^I_\mathrm{k}))+\sum_{k\in \mathcal{K}}p\bar{e},
\end{multline}

\noindent where \begin{math} y_\mathrm{k}p-z_\mathrm{k} \end{math} is the simplied function of $e_\mathrm{k}^\star$ for \begin{math}p^L_\mathrm{k} < p < p^U_\mathrm{k} \end{math}. For MESs with non-zero best-response value within $\Gamma_\mathrm{m}$, they are divided into two sets: $\phi_\mathrm{1}$ and $\phi_\mathrm{2}$, where \begin{math} \phi_\mathrm{1}=\{k|y_\mathrm{k}p-z_\mathrm{k}<B_\mathrm{k}-e^I_\mathrm{k}\} \end{math} and \begin{math} \phi_\mathrm{2}=\{k|y_\mathrm{k}p-z_\mathrm{k}\geq B_\mathrm{k}-e^I_\mathrm{k}\} \end{math}. As $p^L_\mathrm{k}$ and $p^U_\mathrm{k}$ are deterministic and irrelevant to $p$, $\phi_\mathrm{1}$ and $\phi_\mathrm{2}$ are deterministic and fixed.  The second derivative of the utility function is calculated as:

\begin{equation}
\frac{\partial^2 U_\mathrm{P}(p, \Gamma_\mathrm{m})}{\partial p^2}=-2\alpha_\mathrm{L} \sum_{L_\mathrm{j}\in \mathcal{L}}a_\mathrm{j}^2 \sum_{k \in \Lambda_\mathrm{i}\cap \phi_\mathrm{1}}y_\mathrm{k}^2-4 \sum_{k \in \phi_1}y_\mathrm{k},
\end{equation}

\noindent where \begin{math} \alpha_\mathrm{L}, a_\mathrm{j}, y_\mathrm{k}>0 \end{math}. Thus, \begin{math} \partial^2 U_\mathrm{P}(p, \Gamma_\mathrm{m})/\partial p^2<0 \end{math}, making the utility function concave and differential. Moreover, by substituting equation (\ref{MESprice}) to constraints (\ref{PSOC1}), (\ref{PSOC2}), both constraints are convex (half-space). Therefore, problem (\ref{Op2}) within the sub-domain $\Gamma_\mathrm{m}$ is a convex optimization problem.

\end{proof}

\begin{lemma}
The PSO has a globally optimal price, given the best-response strategies of MESs.
\end{lemma}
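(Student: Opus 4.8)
The plan is to assemble the global optimum by stitching together the solutions of the $M-1$ convex subproblems identified in Lemma 2. First I would note that the admissible price set is the compact interval $[\min\{p_\mathrm{range}\},\max\{p_\mathrm{range}\}] = \bigcup_{m=1}^{M-1}\Gamma_\mathrm{m}$, a finite union of closed subintervals overlapping only at the breakpoints $\gamma_\mathrm{m}$. On this interval each MES best response $e^\star_\mathrm{k}(p)$ from (\ref{BRF}) is continuous: the three pieces agree at $p^L_\mathrm{k}$ (value $0$) and at $p^U_\mathrm{k}$ (value $B_\mathrm{k}-e^I_\mathrm{k}$) by the definition of $p^L_\mathrm{k}, p^U_\mathrm{k}$ in (\ref{MESprice}). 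Hence $U_\mathrm{P}(p,\{e^\star_\mathrm{k}(p)\}_\mathrm{k\in\mathcal{K}})$, a composition of continuous functions, is continuous on the whole admissible interval.

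Next, for each $m$ I would invoke Lemma 2: within $\Gamma_\mathrm{m}$ the reduced objective $U_\mathrm{P}(p,\Gamma_\mathrm{m})$ in (\ref{PSOutility2}) is concave, and substituting (\ref{MESprice}) into (\ref{PSOC1}), (\ref{PSOC2}) cuts out a closed (convex) subset of $\Gamma_\mathrm{m}$. Whenever this feasible subset is nonempty it is compact, so by the Weierstrass theorem the $m$-th subproblem attains a maximizer $p^\star_\mathrm{m}$ with value $v_\mathrm{m} \triangleq U_\mathrm{P}(p^\star_\mathrm{m},\Gamma_\mathrm{m})$; convexity further guarantees that $p^\star_\mathrm{m}$ is recovered exactly from the KKT conditions (equivalently, the stationarity point of (\ref{PSOutility2}) clipped to the feasible endpoints of $\Gamma_\mathrm{m}$). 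Since the PSO only operates in regimes where the GCS energy balance is satisfiable, at least one $\Gamma_\mathrm{m}$ is feasible, so $\mathcal{M}^\mathrm{feas} \triangleq \{m : \Gamma_\mathrm{m} \text{ feasible}\}$ is nonempty and finite.

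Finally I would put $p^\star = p^\star_{m^\star}$ with $m^\star \in \argmax_{m\in\mathcal{M}^\mathrm{feas}} v_\mathrm{m}$. Because the admissible interval is covered by the $\Gamma_\mathrm{m}$ and $U_\mathrm{P}$ is continuous, any feasible price $p$ lies in some feasible $\Gamma_\mathrm{m}$ and obeys $U_\mathrm{P}(p) \le v_\mathrm{m} \le v_{m^\star} = U_\mathrm{P}(p^\star)$, so $p^\star$ is a global maximizer of (\ref{Op2}). The main obstacle is the bookkeeping at the interfaces: one must confirm that the per-interval maxima are compared consistently even when a maximizer sits at a shared breakpoint $\gamma_\mathrm{m}$, that continuity of $U_\mathrm{P}$ across breakpoints truly holds (which reduces to the continuity of each $e^\star_\mathrm{k}(p)$ already supplied by Lemma 1), and that the degenerate case of every $\Gamma_\mathrm{m}$ being infeasible — i.e., no price simultaneously meeting all LCS demand and RCS supply constraints — is excluded.
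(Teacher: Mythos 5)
Your proof follows the same route as the paper: decompose (\ref{Op2}) into the $M-1$ convex subproblems of Lemma 2, solve each on $\Gamma_\mathrm{m}$, and select the best of the finitely many per-interval optima as in (\ref{Op23}). You additionally verify details the paper leaves implicit (continuity of $e^\star_\mathrm{k}(p)$ across the breakpoints, Weierstrass existence on each compact feasible piece, and nonemptiness of at least one feasible subinterval), all of which are consistent with, and strengthen, the paper's argument.
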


\begin{proof}
By decomposing problem (\ref{Op2}) into $M-1$ sub-problems as defined in Lemma 2, the original problem can be rewritten as:

 \begin{align}
 \label{Op22}
\max_{m} \max_{p \in \Gamma_\mathrm{m}} & \quad U_\mathrm{P}(p, \Gamma_\mathrm{m}) \\
 \textrm{s.t.}  & \quad (\ref{PSOC1}), (\ref{PSOC2}) \notag.
\end{align}

\noindent By obtaining the optimal result $p_\mathrm{m}^\star$ of the convex sub-problem following Lemma 2, we can find the globally optimal price $p^\star$ by searching the maximum utility value from $[p_\mathrm{1}^\star, \dots, p_\mathrm{m}^\star, \dots, p_\mathrm{M-1}^\star]$:

\begin{align}
\label{Op23}
p^\star=\argmax_{p\in [p_\mathrm{1}^\star, \dots, p_\mathrm{m}^\star, \dots, p_\mathrm{M-1}^\star]} & \quad U_\mathrm{P}(p, \Gamma_\mathrm{m}) \\
 \textrm{s.t.}  & \quad (\ref{PSOC1}), (\ref{PSOC2}) \notag.
\end{align}

\end{proof}

Thus, the existences and uniqueness of Stackelberg equilibrium can be proved in the following proposition.

\begin{prop}
For the formulated game, a unique Stackelberg equilibrium exists.
\end{prop}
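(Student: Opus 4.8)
The plan is to obtain Proposition~1 by stitching Lemmas~1--3 together, after decomposing a Stackelberg equilibrium into a follower part and a leader part. The first observation is that the follower subgame, though phrased as a non-cooperative game, is uncoupled across players: in (\ref{MESutility}), $U_\mathrm{k}(e_\mathrm{k},p)$ depends only on MES $k$'s own action $e_\mathrm{k}$ and on $p$ (the quantity $\bar e$ is a fixed aggregate, not a strategy), so any Nash equilibrium of the follower subgame is precisely the profile of individual best responses, which Lemma~1 shows to exist, be unique, and be given in closed form by (\ref{BRF}). Hence it suffices to prove that the leader's induced program $\max_p U_\mathrm{P}(p,\{e^\star_\mathrm{k}(p)\}_{k\in\mathcal{K}})$ subject to (\ref{PSOC1})--(\ref{PSOC2}) has a unique maximizer $p^\star$; then $(p^\star,\{e^\star_\mathrm{k}(p^\star)\}_{k\in\mathcal{K}})$ is the unique Stackelberg equilibrium in the sense of (\ref{Op3}).

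Existence of $p^\star$ is essentially Lemma~3, but I would also record the underlying continuity/compactness: each $e^\star_\mathrm{k}(\cdot)$ in (\ref{BRF}) is continuous --- substituting (\ref{MESprice}) into the middle branch yields value $0$ at $p=p^L_\mathrm{k}$ and value $B_\mathrm{k}-e^I_\mathrm{k}$ at $p=p^U_\mathrm{k}$, so the three pieces match and $e^\star_\mathrm{k}$ is continuous, non-decreasing and piecewise linear --- hence $p\mapsto U_\mathrm{P}(p,\{e^\star_\mathrm{k}(p)\}_{k\in\mathcal{K}})$ is continuous, and the admissible price set (the intersection of $[\min\{p_\mathrm{range}\},\max\{p_\mathrm{range}\}]$ with the closed sets defined by (\ref{PSOC1})--(\ref{PSOC2}), assumed non-empty) is compact, so a maximizer exists.

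For uniqueness I would work on the partition $\{\Gamma_\mathrm{m}\}_{m=1}^{M-1}$ of Lemma~2. On a fixed $\Gamma_\mathrm{m}$ the index sets $\phi_\mathrm{1},\phi_\mathrm{2}$ are constant, and since $p\mapsto\sum_{k\in\Lambda_\mathrm{j}}e^\star_\mathrm{k}(p)$ and $p\mapsto\sum_{k\in\Omega_\mathrm{i}}e^\star_\mathrm{k}(p)$ are non-decreasing, constraints (\ref{PSOC1})--(\ref{PSOC2}) reduce to lower and upper cutoffs on $p$, so the feasible price set inside $\Gamma_\mathrm{m}$ is an interval; by Lemma~2, $U_\mathrm{P}(p,\Gamma_\mathrm{m})$ is concave on it, and strictly concave whenever some MES lies in the interior branch of (\ref{BRF}) on $\Gamma_\mathrm{m}$, so the $m$-th sub-problem has at most one optimizer $p^\star_\mathrm{m}$. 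The remaining step is to show the outer $\argmax$ in (\ref{Op23}) is a singleton. The route I would try is to upgrade piecewise concavity to global concavity of $p\mapsto U_\mathrm{P}(p,\{e^\star_\mathrm{k}(p)\})$ on the whole admissible interval, which (given that $e^\star_\mathrm{k}$ is continuous and piecewise linear) reduces to checking that at each breakpoint $\gamma_\mathrm{m}$ the left-hand derivative of $U_\mathrm{P}$ is at least its right-hand derivative --- a one-sided-derivative computation involving the slope changes of the active responses, the curvature $-2\alpha_\mathrm{L}a_\mathrm{j}^2$ of the loading terms, and the assumed signs $\alpha_\mathrm{L},a_\mathrm{j},\alpha_\mathrm{1},\alpha^D_\mathrm{k},D_\mathrm{k}>0$. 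If global concavity holds, the maximizer over the admissible interval is unique; combining with the unique follower response of Lemma~1 finishes the proof.

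I expect this last step to be the main obstacle: strict concavity is only \emph{piecewise}, and the leader's objective is not obviously globally concave or even quasiconcave, since it is a sum of loading bumps (each unimodal in $p$, but peaking at different prices) plus a cost term, and a sum of unimodal functions can fail to be unimodal. So the kink-sign check above has to be carried out carefully; should it fail in some operating regime (for instance when an MES saturates while its destined LCS is already past its peak load), the fallback is to impose a mild non-degeneracy assumption on the data guaranteeing that the comparison in (\ref{Op23}) has a unique winner. Everything else --- existence of $p^\star$, uniqueness of each sub-problem optimum, and uniqueness of the follower best-response profile --- is immediate from Lemmas~1--3.
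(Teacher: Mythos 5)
Your proposal follows the same skeleton as the paper's proof: Lemma~1 supplies a unique closed-form follower best response (and, as you note, the follower game is uncoupled across players since $U_\mathrm{k}$ depends only on $e_\mathrm{k}$ and $p$, so the Nash profile is just the vector of individual best responses), and Lemmas~2--3 handle the leader's problem by decomposing the price range into the intervals $\Gamma_\mathrm{m}$ and comparing the sub-problem optima as in (\ref{Op23}). Your added observations --- continuity of $e^\star_\mathrm{k}(\cdot)$ at the breakpoints $p^L_\mathrm{k},p^U_\mathrm{k}$ (your substitution check is correct), compactness of the feasible price set for existence, and monotonicity of the aggregate responses so that (\ref{PSOC1})--(\ref{PSOC2}) cut out an interval of prices within each $\Gamma_\mathrm{m}$ --- are consistent with, and more explicit than, what the paper records.

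The point where you diverge is precisely where the paper is weakest. Lemma~3 proves only \emph{existence} of a globally optimal price, by taking the best of the $M-1$ sub-problem optima; the paper's proof of the proposition then simply asserts that the PSO ``achieves its global optimum'' and concludes uniqueness, with no argument that the comparison in (\ref{Op23}) has a single winner, nor that $U_\mathrm{P}(p,\{e^\star_\mathrm{k}(p)\})$ is concave or even unimodal across the breakpoints $\gamma_\mathrm{m}$. Your worry --- that a sum of loading bumps peaking at different prices plus a cost term need not be quasiconcave, so piecewise strict concavity does not by itself force a unique global maximizer --- is legitimate, and the paper does not address it. In that sense your proposal is not missing anything relative to the paper; it makes explicit a gap the paper's own proof leaves open, and your two remedies (a kink-sign check upgrading piecewise concavity to global concavity, or a non-degeneracy assumption ensuring a unique winner in (\ref{Op23})) are the natural ways to close it. If you carry this out, state whichever of the two you end up relying on as an explicit hypothesis rather than leaving it implicit as the paper does.
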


\begin{proof}
As shown in Lemma 1, each MES has a unique best-response strategy $e^\star_\mathrm{k}(p)$ given a posted price $p$ . Then, by substituting $e^\star_\mathrm{k}(p)$ to problem (\ref{Op2}), we prove the global optimum of PSO strategy as in Lemma 2 and Lemma 3. As the PSO achieves its global optimum and each MES has a unique best-response strategy, the unique Stackelberg equilibrium is obtained.
\end{proof}

\subsection{Stackelberg Game Algorithm}

During each scheduling time slot $h$, the PSO scheduling price can be obtained as in Algorithm \ref{algo1}.

\begin{algorithm}
\footnotesize
\label{algo1}

\For{k=1 to K}{
Calculate $p^L_\mathrm{k}$ and $p^L_\mathrm{k}$ according to equation (\ref{MESprice}) \;
}
Sorting $p^L_\mathrm{k}$ and $p^L_\mathrm{k}$ in an ascending order to form vectors $\gamma$ and $\Gamma$ \linebreak
\For{m=1 to M-1}{
Find $p_\mathrm{m}^\star$ by solving problem (\ref{Op2}) within $[\Gamma_\mathrm{m},\Gamma_\mathrm{m+1}]$\;
}
Find the optimal $p^\star$ with the maximal utility value according to equation (\ref{Op23}).

 \caption{Stackelberg game solution.}
\end{algorithm}

The proposed algorithm does not require iterations to analyze the Stackelberg game. As the number of MESs increases, the algorithm complexity increases accordingly. Thus, the proposed algorithm can be applied to schedule a large number of MESs.

\section{Simulation Results}

To validate the effectiveness of the proposed scheme, we present simulation results in this section. The parameter setting is shown in Table \ref{table1}. RCSs $R_\mathrm{1}$ has 1.6MWh and $R_\mathrm{2}$ has 900kWh surplus energy for MESs to deliver. LCS $L_\mathrm{1}$ demands 100-200kWh energy to be delivered while $L_\mathrm{2}$ demands energy between 150-300kWh. RCSs adopt society of automotive engineers (SAE) combined charging system (CCS) level 2 charging standard at 90kW and LCSs adopt SAE CCS charging standard at 60kW \cite{SAECCS}. The number of MESs $N_\mathrm{ij}$ along the $R_\mathrm{i}$-$L_\mathrm{j}$ pair is also included in Table \ref{table1}. The MES service capacities are considered as random variables that follow a normal distribution with a mean value of 14 and a standard deviation of 5 (kWh). Similarly, the MES battery capacities also follow a normal distribution with an 80kWh mean and a standard deviation of 10kWh. The battery degradation parameters $D_\mathrm{k}$, $\alpha_\mathrm{1}$, $\alpha_\mathrm{2}$, $\alpha_\mathrm{3}$ are calculated according to data in the work \cite{batterydegradation26}. While the MES battery degradation cost is relatively low (\emph{e.g.}, 4$\times 10^{-4}$) per cycle, it is still a great concern for MES drivers. Thus, $\alpha^D_\mathrm{k}$ is set to 10$^5$ and $\alpha^T_\mathrm{k}$ is set to 30 to make them comparable with service reward and motivation reward. The loading cost weight $\alpha_\mathrm{L}$ is set to 0.5.

\begin{table}

\setlength{\abovecaptionskip}{-5pt}
\scriptsize

\renewcommand\arraystretch{1.5}

\caption{Simulation Parameters}
\label{table1}

\begin{center}
\begin{tabular}{p{40pt} p{50pt} | p{40pt} p{50pt}  }
	\hline
	Para. & Value & Para. & Value \\
	\hline
	$E_\mathrm{1}$ & 1.6MWh & $E_\mathrm{2}$ & 900kWh \\
	\hline
	$D^L_\mathrm{1}$& 100kWh & $D^U_\mathrm{1}$ & 200kWh \\
	\hline
	$D^L_\mathrm{2}$ & 150kWh & $D^U_\mathrm{2}$& 300kWh \\
	\hline
	$P_\mathrm{R_\mathrm{1}/R_\mathrm{2}}$ & 90kW & $P_\mathrm{L_\mathrm{1}/L_\mathrm{2}}$ & 60kW \\
	\hline
	$N_\mathrm{11}$ & 6 & $N_\mathrm{12}$ & 8 \\
	\hline
	$N_\mathrm{21}$ & 7 & $N_\mathrm{22}$ & 4 \\
	\hline
	E(${e_\mathrm{k}}$) & 14kWh &$\sigma({e_\mathrm{k}})$ & 5kWh \\
	\hline
	E(${B_\mathrm{k}}$) & 80kWh & $\sigma({B_\mathrm{k}})$ & 10kWh \\
	\hline
	$\alpha^T_\mathrm{k}$ & 30 & $\alpha^D_\mathrm{k}$ & 10$^5$ \\
	\hline
	$D_\mathrm{k}$ & 5.08$\times$10$^{-4}$ & $\alpha_\mathrm{1}$ & 1 \\
	\hline
	$\alpha_\mathrm{2}$ & -0.222 & $\alpha_\mathrm{L}$ & 0.5 \\ 
	\hline

\end{tabular}
\end{center}
\end{table}

We compare the proposed scheme with the price-minimized scheme and random scheme in terms of PSO utility revenue, as shown in Fig. \ref{simulation1}. In the price-minimized scheme, the PSO aims to minimize its service price. In the random scheme, the PSO randomly adjusts the service price to meet LCS energy demands. It can be seen that the proposed scheme has the highest utility revenue compared with both price-minimized and random schemes as the price-minimized scheme only tries to minimize the price, but ignores the loading revenue impact on the utility function. Compared to the price-minimized scheme, the random scheme schedules more MESs and can achieve higher revenue as it does not put strict constraints to achieve minimal loading demands. Moreover, as MES service capacities increase, the PSO utility increases. Since there are more on-road service capacities, a lower service price can be posted and the PSO can have a high loading revenue. For the price-minimized scheme, the revenue increment is smaller as it provides minimal LCS loading demands and the loading revenue stays almost the same.

\begin{figure}[!t]
\centering
\setlength{\abovecaptionskip}{-5pt}
\includegraphics[width=0.5\textwidth]{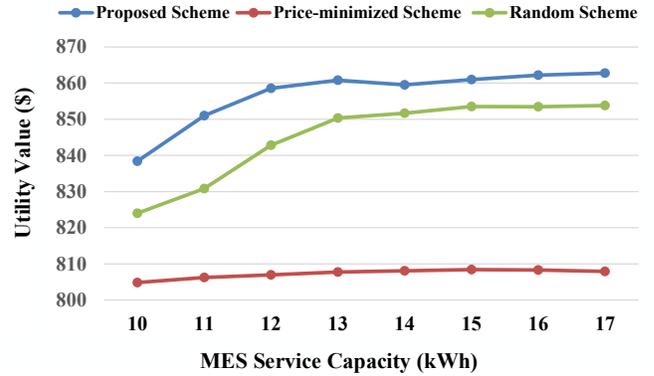}
\vspace{-0.3cm}
\caption{PSO utility revenue with different MES service capacities.}
\vspace{-0.3cm}
\label{simulation1}
\end{figure}

 \begin{figure}[!t]
\centering
\setlength{\abovecaptionskip}{-5pt}
\includegraphics[width=0.5\textwidth]{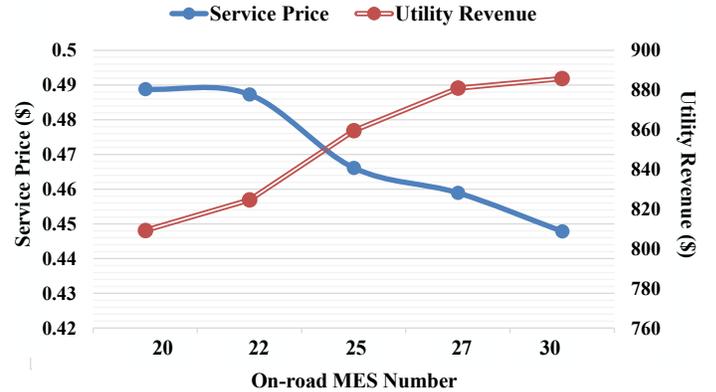}
\vspace{-0.3cm}
\caption{Impact of on-road MES number on energy scheduling.}
\vspace{-0.3cm}
\label{simulation2}
\end{figure}

We also discuss the energy scheduling scheme under different operation scenarios. The impact of on-road MES number on the scheduling result is shown in Fig. \ref{simulation2}. It can be seen that as the on-road MES number increases, the service price decreases since the PSO has more potential energy servers, and less motivation is required. Correspondingly, the PSO utility revenue increases. Moreover, with more MESs participating in the service, more energy can be delivered to LCSs, which increases the loading revenue part of the utility.

 \begin{figure}[!t]
\centering
\setlength{\abovecaptionskip}{-5pt}
\includegraphics[width=0.5\textwidth]{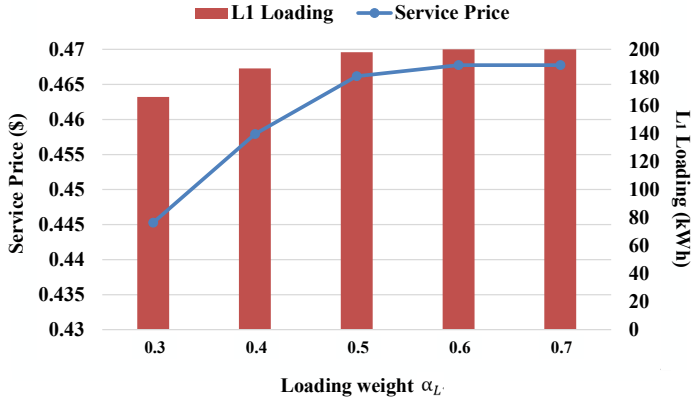}
\vspace{-0.3cm}
\caption{Impact of the loading weight on energy scheduling.}
\vspace{-0.3cm}
\label{simulation3}
\end{figure}

Depending on the PSO operation goal, the MES scheduling could lean towards operation cost minimization or loading revenue maximization. By adjusting the loading weight $\alpha_\mathrm{L}$, the operation objectives vary, and the scheduling result also changes, as shown in Fig. \ref{simulation3}. It can be seen that as the loading weight increases, the MES scheduling mainly focuses on loading revenue maximization. To encourage MESs delivering more energy, the service price increases until $\alpha_\mathrm{L}$ reaches 0.6. We can observe from the figure that when \begin{math}  \alpha_\mathrm{L}=0.6\end{math}, the loading at $L_\mathrm{1}$ reaches its maximal demanding load $D^U_\mathrm{1}$. Therefore, a higher loading weight will result in the same loading results as limited by the loading constraints, and the service price will remain the same.

 \begin{figure}[!t]
\centering
\setlength{\abovecaptionskip}{-5pt}
\includegraphics[width=0.5\textwidth]{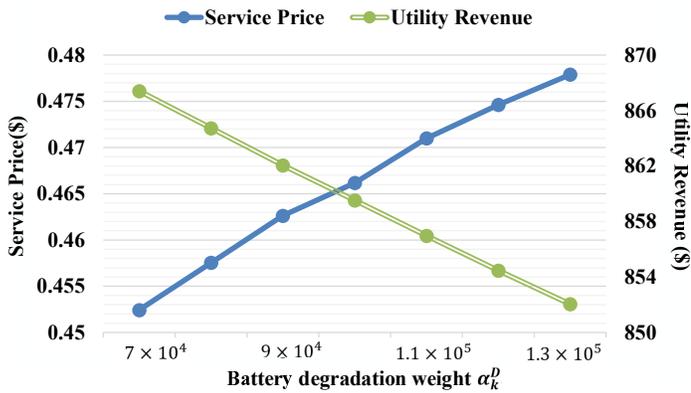}
\vspace{-0.3cm}
\caption{Impact of the battery degradation weight on energy scheduling.}
\vspace{-0.3cm}
\label{simulation4}
\end{figure}

As the battery technology advances, the MES driver's preference towards energy discharging also changes. Therefore, we discuss the battery degradation weight impact on scheduling results, as shown in Fig. \ref{simulation4}. As the weight $\alpha^D_\mathrm{k}$ increases, MES drivers are more reluctant to discharge energy, and the price range $p_\mathrm{range}$ becomes wider. Therefore, to stimulate MESs actively fulfilling the tasks, PSO needs to post a higher service price. As a result, the operation cost increases, and the PSO utility revenue decreases accordingly.

\section{Conclusion}

In this paper, a price-incentive scheme that stimulates MESs fulfilling energy compensation tasks has been proposed to mitigate overload issues at LCSs. The interaction between the PSO and on-road MESs has been formulated as a Stackelberg game. The existence and uniqueness of Stackelberg equilibrium have been proven, and an algorithm has been designed to find the equilibrium. Simulation results have validated the effectiveness of the proposed scheme under different operation scenarios. The proposed scheme can be applied by the local PSO to balance the system energy without excessive power infrastructure upgrade while MESs are stimulated to fulfill tasks in a cost-efficient way.

For our future works, we will consider unexpected errors during MES service (\emph{e.g.}, human behaviour, loading change, etc.) to enable robust energy scheduling.

\vfill\eject

\end{document}